\def\BibTeX{{\rm B\kern-.05em{\sc i\kern-.025em b}\kern-.08em
    T\kern-.1667em\lower.7ex\hbox{E}\kern-.125emX}}
\newtheorem{theorem}{Theorem}
\newtheorem*{theorem*}{Theorem}
\newtheorem{definition}[theorem]{Definition}
\newtheorem{corollary}[theorem]{Corollary}
\newcommand{\setx}{\ensuremath{\mathcal{X}}}
\newcommand{\sety}{\ensuremath{\mathcal{Y}}}
\newcommand{\setd}{\ensuremath{\mathcal{D}}}
\newcommand{\sett}{\ensuremath{\mathcal{T}}}
\newcommand{\sets}{\ensuremath{\mathcal{S}}}
\newcommand{\up}{u^\prime}
\newcommand{\upp}{u^{\prime\prime}}
\newcommand{\Mp}{M^\prime}
\newcommand{\Mpp}{M^{\prime\prime}}
\newcommand{\dc}{\mathcal{D}}
\newcommand{\dcp}{\mathcal{D}^\prime}
\newcommand{\dcpp}{\mathcal{D}^{\prime\prime}}
\newcommand{\C}{\mathcal{C}}
\newcommand{\Cp}{\mathcal{C}^\prime}
\newcommand{\Cpp}{\mathcal{C}^{\prime\prime}}
\newcommand{\pmax}{P_{\text{max}}}
\newcommand{\pavg}{P_{\text{avg}}}
\DeclarePairedDelimiterX{\infdivx}[2]{(}{)}{%
  #1\;\delimsize\|\;#2%
}
\newcommand{\infdiv}{D\infdivx}
\DeclareMathOperator*{\pliminf}{p-liminf}
\DeclareMathOperator*{\plimsup}{p-limsup}
\begin{document}
\title{Identification Capacity of the Discrete-Time Poisson Channel} 



\author{
\IEEEauthorblockN{Wafa Labidi \IEEEauthorrefmark{1}\IEEEauthorrefmark{2}\IEEEauthorrefmark{4}, Christian Deppe\IEEEauthorrefmark{2}\IEEEauthorrefmark{4} and Holger Boche\IEEEauthorrefmark{1}\IEEEauthorrefmark{3}\IEEEauthorrefmark{4}\IEEEauthorrefmark{5}\IEEEauthorrefmark{6}}
\IEEEauthorblockA{\IEEEauthorrefmark{1}Technical University of Munich\\
\IEEEauthorrefmark{2}Technical University of Braunschweig\\
\IEEEauthorrefmark{3}Cyber Security in the Age of Large-Scale Adversaries–
Exzellenzcluster, Ruhr-Universit\"at Bochum, Germany\\
\IEEEauthorrefmark{4}BMBF Research Hub 6G-life, Germany\\
\IEEEauthorrefmark{5}{\color{black}{ Munich Center for Quantum Science and Technology (MCQST) }}\\
\IEEEauthorrefmark{6} {\color{black}{Munich Quantum Valley (MQV)}} \\
Email: wafa.labidi@tum.de, christian.deppe@tu-braunschweig.de, boche@tum.de}
}

\maketitle


\begin{abstract}
Numerous applications in the field of molecular communications (MC) such as healthcare systems are often event-driven. The conventional Shannon capacity may not be the appropriate metric for assessing performance in such cases. We propose the identification (ID) capacity as an alternative metric. Particularly, we consider randomized identification (RI) over the discrete-time Poisson channel (DTPC), which is typically used as a model for MC systems that utilize molecule-counting receivers. In the ID paradigm, the receiver's focus is not on decoding the message sent. However, he wants to determine whether a message of particular significance to him has been sent or not. In contrast to Shannon transmission codes, the size of ID codes for a Discrete Memoryless Channel (DMC) grows doubly exponentially fast with the blocklength, if randomized encoding is used. In this paper, we derive the capacity formula for RI over the DTPC subject to some peak and average power constraints. Furthermore, we analyze the case of state-dependent DTPC. 

\end{abstract}
\section{Introduction}
Molecular communication (MC) represents an emerging communication paradigm, using molecules or ions as the carriers of information \cite{nakano2012,farsad2016}. MC has arisen as a promising technique for detecting and pinpointing abnormality in diverse applications including biotechnology and medicine, such as drug delivery \cite{muller2004challenges}, cancer treatment \cite{jain1999transport} and health monitoring \cite{nakano2014molecular}. In particular, the use of MC for abnormality detection in medical applications has been extensively studied over the past decade \cite{AbnormalityDetection1,AbnormalityDetection2,AbnormalityDetection3}. In this context, by abnormality detection, we are referring to the detection of viruses, bacteria, infectious microorganisms, or tumors within the body, via either static \cite{AbnormalityDetection1,AbnormalityDetection2,AbnormalityDetection3} or mobile nanomachines \cite{nano1,nano2}. Once the abnormality has been detected, a suitable treatment is initiated using drug delivery, targeted therapy, nanosurgery, etc.
Information-theoretical analysis of diffusion-based MC was established in \cite{Mathfoundations,mathfoundation2}. For diffusion-based MC, the capacity limits of
molecular timing channels were investigated in \cite{capacity2} and lower and upper bounds on the corresponding
capacity were reported.


Numerous applications in MC such as healthcare systems are often event-driven, where the conventional Shannon capacity may not be the appropriate metric for assessing performance. Especially in situations involving event detection, where the receiver needs to make a dependable decision regarding the presence or absence of a particular event, the so-called identification (ID) capacity might be the performance metric of interest. The ID theory was first proposed by Ahlswede and Dueck in \cite{Idchannels, Idfeedback}. They were motivated by the work \cite{ja1985identification,yao1979some} on communication complexity. In the ID paradigm, the receiver’s focus is not on decoding the message sent. However, he wants to determine whether a message of particular significance to him has been sent or not. The sender does not know about the message that piques the receiver's interest. Otherwise, this would be a trivial task. Ahlswede and Dueck \cite{AhlDueck} proved that for Discrete Memoryless Channels (DMCs) the size of ID codes grows doubly exponentially $(2^{2^{nR}})$ fast with the blocklength, if randomized encoding is allowed. Later, ID codes have been developed \cite{verdu1993explicit, derebeyouglu2020performance, von2023identification}. If only deterministic encoding is allowed, the ID rate for a DMC is still more significant than the transmission rate in the exponential scale as shown in \cite{SPBD21pc,SPBD21f}. 
The number of ID messages that can be identified over Gaussian channels \cite{SPBD21pc} and Poisson channels \cite{SPBD21p,SV23isi} scales as ($n^{nR}$), if no randomization is used. Lower and upper bounds on the deterministic identification (DI) capacity over the discrete-time Poisson channel (DTPC) were established in \cite{SPBD21}. 
In MC, a sender, such as a cell, releases molecules into the channel at a specific rate (molecules/second) over a time interval. These molecules spread in the channel by diffusion and/or advection and can even be degraded in the channel by enzymatic reactions. Decoding assumes a counting receiver (nanomachines, cells) capable of counting the number of molecules received. The molecule production/release rate of the transmitter is of course limited. Assuming that the release propagation and the reception of individual molecules are statistically similar but independent, the received signal will follow Poisson statistics if the number of molecules released is large. Therefore, it makes sense to model such channels by a DTPC. The constraints on average and peak power take into account the limited molecule production/release rate of the sender \cite{jamali2019channel, gohari2016information, unterweger2018experimental}. Furthermore, the DTPC is of significant importance in the field of optical communications, where at the transmitter, a laser emits a stream of discrete photons and the receiver consists of a photodetector, which can discern the accurate arrival times of individual photons \cite{shamaiCapacityAchieving}.

In this work, we consider randomized identification (RI) over the DTPC. To the best of our knowledge, the RI capacity of the DTPC has not been studied so far. We derive the RI capacity of the DTPC under average and peak power constraints that account for the limited molecule production/release rate of the sender. We then extend the results to the DTPC with independent and identically distributed (i.i.d.) state sequences. The random channel state is available to neither the sender nor the receiver.  We show that, in this case, the RI capacity of the state-dependent DTPC coincides with its RI capacity.

\textit{Outline:} The remainder of the paper is structured as follows. In Section \ref{sec:Preli}, we introduce our system models, recall some basic definitions related to ID, and present the main results of the paper. In Section \ref{sec:Proof}, we establish the proof of the RI capacity formula of the DTPC and the state-dependent DTPC subject to some average and peak power constraints. Section \ref{sec: conclusion} concludes the paper.

\textit{Notation:}
$\mathbb{R}_0^+$ denotes the set of nonnegative real numbers; $\mathbb{Z}_0^+$ denotes the set of nonnegative integers;
the distribution of a RV $X$ is denoted by $P_X$; for a finite set $\setx$, we denote the set of probability distributions on $\setx$ by $\mathcal{P}(\setx)$ and by $|\setx|$ the cardinality of $\setx$; $\setx^c$ denotes the complement of $\setx$; if $X$ is a RV with distribution $P_X$, we denote the expectation of $X$ by $\mathbb{E}(X)$ and by ${\text{Var}}[X]$ the variance of $X$; if $X$ and $Y$ are two RVs with probability distributions $P_X$ and $P_Y$, the mutual information between $X$ and $Y$ is denoted by 
 $I(X;Y)$; the Kullback-Leibler divergence between $P_X$ and $P_Y$ is denoted by $\infdiv{P_X}{P_Y}$; 
all logarithms and information quantities are taken to the base $2$.
\section{System Models and Main Results} \label{sec:Preli}
Let a memoryless DTPC $(\setx, \sety, W(y|x))$ consisting of input alphabet $\setx \subset\mathbb{R}_0^+$, output alphabet $\sety\subset \mathbb{Z}_0^+$ and a pmf $W(y|x)$ on $\sety$, be given.
For $n$ channel uses, the transition probability law is given by
\begin{align*}
    W^n(y^n|x^n)&= \prod_{t=1}^n W(y_t|x_t) \\
    & = \prod_{t=1}^n \exp\big(-(x_t+\lambda_0)\big) \frac{(x_t+\lambda_0)^{y_t}}{{y_t}!}, 
\end{align*}
where $\lambda_0$ is some nonnegative constant, called dark current. The dark current $\lambda_0$, here, represents the non-ideality of the detector. The sequences $x^n=(x_1,x_2,\ldots,x_n) \in \setx^n$ and $y^n=(y_1,y_2,\ldots,y_n) \in \sety^n$ are the channel input and the channel output, respectively. 
 The peak and average power constraints on the input are 
\begin{align}
    & x_t\leq \pmax, \quad \forall t=1,\ldots,n, \label{eq:peakPower}\\
    & \frac{1}{n} \sum_{t=1}^n x_t \leq \pavg, \label{eq:averagePower}
\end{align}
 where $P_{\text{max}}, \ P_{\text{avg}}>0$ represent the values for peak power and average power constraints, respectively. 
 Let $\setx(P_{\text{max}},P_{\text{avg}})$ denote the set of all input symbols $x \in \setx$ such that \eqref{eq:peakPower} and \eqref{eq:averagePower} are satisfied.
 Let $C(W, \pmax,\pavg)$ denote the Shannon transmission capacity of the DTPC $W$ under peak and average power constraints $\pmax$ and $\pavg$ as described in \eqref{eq:peakPower} and \eqref{eq:averagePower}, respectively. Then  $C(W, \pmax,\pavg)$ is given by \cite{shamaiCapacityAchieving}
\begin{equation}
    C(W,\pmax,\pavg)  =\max_{\substack{P_X \in \mathcal{P}(\setx) \\ X \in \setx(P_{\text{max}},P_{\text{avg}})}} I(X;Y). \label{eq:ShannonCapacityDTPC}
\end{equation}
For the sake of notational simplicity, we denote the DTPC described above by $W$. \\
In the setting depicted in Fig. \ref{figSystem}, the sender  releases molecules (ID message $i \in \mathcal{N}:=\{1,2,\ldots,N\}$) into the DTPC $W$ at a specific rate over a time interval. The receiver, here, (nanomachines, cells) wants to check whether a specific pathological biomarker exists around the target or not.  Assuming that the release propagation and the reception of individual molecules are statistically similar but independent, the received signal follows Poisson statistics.

In the following, we define RI codes for the DTPC introduced above.
\begin{definition}
	An $(n,N,\lambda_1,\lambda_2)$ RI code with $\lambda_1+\lambda_2<1$ for the DTPC $W$ is a family of pairs 
	$\{(Q(\cdot|i),\setd_i), \quad   i=1,\ldots,N\}$ with 
	\begin{equation*}
	Q(\cdot|i) \in \mathcal{P}(\setx^n(P_{\text{max}},P_{\text{avg}})), \ \setd_i \subset \sety^n,\ \forall i=1,\ldots,N,
	\end{equation*}
	such that the errors of the first kind and the second kind are bounded as follows.
	\begin{align}
	 \mu_1^{(i)} &\triangleq \int_{x^n \in \setx^n(P_{\text{max}},P_{\text{avg}})} Q(x^n|i) \nonumber\\
  & \quad \cdot W^n(\setd_i^c|x^n) dx^n  \leq \lambda_1,  \forall i, \label{eq:firsterror}\\
\mu_2^{(i,j)} & \triangleq \int_{x^n \in \setx^n(P_{\text{max}},P_{\text{avg}})} Q(x^n|i) \nonumber\\
& \quad \cdot W^n(\setd_j|x^n) dx^n  \leq \lambda_2, \forall i\neq j, \label{eq:seconderror} 
	\end{align}
 where $\setx^n(P_{\text{max}},P_{\text{avg}})=\underbrace{\setx(P_{\text{max}},P_{\text{avg}})\times \ldots \times \setx(P_{\text{max}},P_{\text{avg}})}_{n \text{ times}}$.
 
\end{definition}
The error $\mu_1^{(i)}$ in \eqref{eq:firsterror} is called the error of the first kind, which is produced by channel noise and fits the same error definition as for a transmission code. In addition, we have another kind of error $\mu_2^{(i,j)}$ called the error of the second kind \eqref{eq:seconderror}, which results from the ID code construction. In contrast to transmission code, we permit a degree of overlap among the decoding sets in the case of ID codes. This results in both a double exponential increase in the rate (e.g., for DMCs) and introduces another kind of error as described in \eqref{eq:seconderror}. \\
In the following, we define achievable ID rate and ID capacity for our system model.
\begin{definition}
	\begin{enumerate}
		\item The rate $R$ of an $(n,N,\lambda_1,\lambda_2)$ RI code for the channel $W$ is $R=\frac{\log\log(N)}{n}$ bits.
		\item The ID rate $R$ for $W$ is said to be achievable if for $\lambda \in (0,\frac{1}{2})$ there exists an $n_0(\lambda)$, such that for all $n\geq n_0(\lambda)$ there exists an $(n,2^{2^{nR}},\lambda,\lambda)$ RI code for $W$.
		\item The RI capacity $C_{ID}(W,\pmax,\pavg)$ of the channel $W$ under the peak and average power constraints \eqref{eq:peakPower} and \eqref{eq:averagePower} is the supremum of all achievable rates.
	\end{enumerate}
\end{definition}
The following Theorem characterizes the RI capacity of the DTPC $W$ under peak and average power constraints $\pmax$ and $\pavg$, respectively.
\begin{theorem} \label{theorem:randomizedID}
 The RI capacity $C_{ID}(W,\pmax,\pavg)$ of the channel $W$ under peak and average power constraints $\pmax$ and $\pavg$, respectively, is given by
	\begin{align*}
	C_{ID}(W,\pmax,\pavg)&=C(W,\pmax,\pavg).
	\end{align*}
\end{theorem}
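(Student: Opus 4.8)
The plan is to prove the two inequalities $C_{ID}(W,\pmax,\pavg)\le C(W,\pmax,\pavg)$ (converse) and $C_{ID}(W,\pmax,\pavg)\ge C(W,\pmax,\pavg)$ (achievability) separately, following the classical Ahlswede--Dueck paradigm but taking care of the continuous input alphabet $\setx\subset\mathbb{R}_0^+$ and the countably infinite output alphabet $\sety=\mathbb{Z}_0^+$ that are specific to the DTPC.

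For achievability I would build the RI code on top of a good Shannon transmission code. Fix any $R<C(W,\pmax,\pavg)$; by definition of the transmission capacity there exists, for every sufficiently large $n$, an $(n,M)$ transmission code with $M=\lceil 2^{nR}\rceil$ codewords $u_1,\dots,u_M\in\setx^n(\pmax,\pavg)$ and disjoint decoding sets $\mathcal{G}_1,\dots,\mathcal{G}_M\subseteq\sety^n$ whose maximal error probability vanishes. I then assign to each ID message $i\in\{1,\dots,N\}$ an index set $\mathcal{T}_i\subseteq\{1,\dots,M\}$, let $Q(\cdot\mid i)$ be the uniform distribution over $\{u_m:m\in\mathcal{T}_i\}$, and set $\setd_i=\bigcup_{m\in\mathcal{T}_i}\mathcal{G}_m$. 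The first-kind error $\mu_1^{(i)}$ is then bounded by the transmission error and vanishes, while a short computation shows $\mu_2^{(i,j)}\le |\mathcal{T}_i\cap\mathcal{T}_j|/|\mathcal{T}_i|+\varepsilon_n$ with $\varepsilon_n\to 0$. The crux is to exhibit a family of $N$ index sets, with $N$ doubly exponential in $n$, whose pairwise relative overlap stays below the prescribed threshold. This I would establish by a random construction in which each $\mathcal{T}_i$ is drawn independently, including every index with a fixed probability $p$: a Chernoff bound shows that each pairwise relative overlap concentrates below $\lambda_2$ with probability $1-2^{-\Omega(M)}$, and a union bound over the $\binom{N}{2}=2^{O(M)}$ pairs succeeds as long as the constant in $\log N=\Theta(M)$ is chosen small enough. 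Since $\log N=\Theta(M)=\Theta(2^{nR})$, this yields $\tfrac1n\log\log N\to R$, and letting $R\uparrow C$ gives $C_{ID}\ge C$.

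For the converse I would invoke the soft converse for identification: starting from an arbitrary $(n,N,\lambda_1,\lambda_2)$ RI code, I would show $\tfrac1n\log\log N\le C(W,\pmax,\pavg)+o(1)$. The idea is that the two-kind error criterion forces the induced output distributions $\{Q(\cdot\mid i)\,W^n\}_{i}$ to be pairwise distinguishable in the sense that $P_i(\setd_i)\ge 1-\lambda_1$ while $P_j(\setd_i)\le\lambda_2$ for $j\neq i$, and the number of distributions of this mixture form that can be reliably separated is doubly exponential with an exponent governed by the channel resolvability. The strong converse of the DTPC transmission problem, valid under the peak-power constraint, is what pins this exponent down to $C$.

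The step I expect to be the main obstacle is making both directions rigorous despite the continuous input alphabet and the infinite output alphabet, since the finite-alphabet Ahlswede--Dueck combinatorics and the counting argument in the converse do not apply verbatim. To remove the alphabet-size difficulty I would exploit that, under a peak (and average) power constraint, the capacity-achieving input distribution of the DTPC is discrete with only finitely many mass points; restricting attention to this finite support reduces the input side to an effectively finite alphabet, so that the random-subset construction and the distinguishability count go through essentially unchanged with rate still tending to $C$. The infinite output alphabet is then handled by observing that the second-kind error estimates depend only on concentration of the Poisson output statistics, which holds uniformly because every input symbol is bounded by $\pmax$.
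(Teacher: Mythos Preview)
Your proposal takes a genuinely different route from the paper. The paper does \emph{not} separate achievability and converse in the Ahlswede--Dueck style; instead it invokes a single black-box reduction from Han's information-spectrum framework (\cite[Corollary 6.6.1]{HanBook}): for any channel satisfying the \emph{strong converse} under the given cost constraints, the RI capacity equals the transmission capacity. The entire technical content of the paper is then a direct proof that the DTPC under peak and average power constraints satisfies the strong converse, i.e.\ that
\[
\sup_{\mathbf X\in\mathcal U(\pmax,\pavg)}\overline I(\mathbf X,\mathbf Y)\le C(W,\pmax,\pavg).
\]
This is done by (i) using the Kuhn--Tucker conditions for the (discrete, finitely supported) capacity-achieving input $P_{\bar X}$ to show that $\infdiv{W(\cdot|x)}{P_{\bar Y}}\le C$ for every feasible $x$, hence the conditional mean of the information density is at most $C$; (ii) bounding the conditional second moment of $\log\frac{W(Y|x)}{P_{\bar Y}(Y)}$ by an explicit constant depending only on $\lambda_0$ and $\pmax$, via convexity of relative entropy and the closed-form KL between Poisson laws; and (iii) Chebyshev plus \cite[Lemma 3.2.1]{HanBook}.

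Your achievability sketch is fine and standard, though strictly unnecessary once Han's corollary is available. The gap in your proposal is the converse: you correctly identify that the strong converse for DTPC transmission is the pivot (``what pins this exponent down to $C$''), but you then treat it as given rather than prove it. That strong converse is precisely the nontrivial part here, and your closing paragraph---reducing to the finite support of the optimal input and appealing to Poisson concentration---does not supply it. Restricting the \emph{codewords} to the capacity-achieving support is harmless for achievability but illegitimate for the converse, where you must handle arbitrary inputs in $\setx^n(\pmax,\pavg)$; the paper instead restricts only the \emph{reference output measure} to $P_{\bar Y}$ and uses the KKT inequality $L(\mu,x,P_{\bar X})\ge 0$ to control \emph{every} feasible $x$. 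Likewise, ``concentration of the Poisson output statistics'' is not enough on its own: you need a uniform variance bound on the log-likelihood ratio against $P_{\bar Y}$, which the paper obtains through the explicit chain of inequalities culminating in the $(\lambda_0+\pmax)^2/\lambda_0$ bound. Without these two ingredients your converse remains a plan rather than a proof.
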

It is to be noted that the right-hand side of \eqref{eq:theorem} is a convex optimization problem. Therefore, Theorem \ref{theorem:randomizedID} provides the formula for computing the ID capacity of the DTPC as a function of the peak power and average power values $\pmax$ and $\pavg$, respectively. Many studies have been dedicated to investigating the properties of the capacity-achieving distribution of the DTPC. In the absence of input constraints, the transmission capacity of the DTPC is infinite. The DTPC was addressed in \cite{BarlettaShamai} when only subject to peak power constraint and
was shown that the support size is of an order between $\sqrt{\pmax}$ and $\pmax \ln^2 \pmax$. An Analytical formulation of the transmission capacity under only average power constraints is still an open problem. However, bounds and asymptotic behaviors for the DTPC in different scenarios were established e.g., \cite{lapidoth2008capacity}, \cite{lapidothWang}, \cite{aminian}, etc. It was shown in \cite{shamaiCapacityAchieving} that the capacity-achieving distribution for the DTPC under an average-power constraint and/or a peak
power constraint is discrete. However, an analytical formula for the capacity-achieving distribution of the DTPC remains an unresolved question. In some instances, it was demonstrated that the optimal input distribution for the DTPC is not even computable \cite{boche2023algorithmic}.
All these results can be extended and applied to the RI case.   
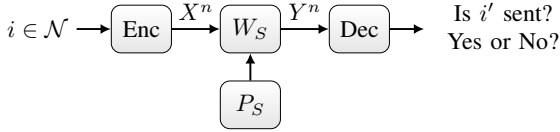
\begin{figure}
    \centering
    \tikzstyle{farbverlauf} = [ top color=white, bottom color=white!80!gray]
\tikzstyle{block} = [draw,top color=white, bottom color=white!80!gray, rectangle, rounded corners,
minimum height=2em, minimum width=2.5em]
\tikzstyle{input} = [coordinate]
\tikzstyle{sum} = [draw, circle,inner sep=0pt, minimum size=2mm,  thick]
\scalebox{.9}{
\tikzstyle{arrow}=[draw,->] 
\begin{tikzpicture}[auto, node distance=2cm,>=latex']
\node[] (M) {$i \in \mathcal{N}$};
\node[block,right=.5cm of M] (enc) {Enc};
\node[block, right=.7cm of enc] (channel) {$W_S$};
\node[block,below=.4cm of channel](state){$P_S$};
\node[block, right=.7cm of channel] (dec) {Dec};
\node[right=.5cm of dec] (Mhat) {\begin{tabular}{c} Is ${i}^\prime$ sent? \\ Yes or No? \end{tabular}};
\node[input,right=.5cm of channel] (t1) {};
\node[input,above=1cm of t1] (t2) {};
\draw[-{Latex[length=1.5mm, width=1.5mm]},thick] (M) -- (enc);
\draw[-{Latex[length=1.5mm, width=1.5mm]},thick] (enc) --node[above]{ $X^n$} (channel);
\draw[-{Latex[length=1.5mm, width=1.5mm]},thick] (channel) --node[above]{$Y^n$} (dec);
\draw[-{Latex[length=1.5mm, width=1.5mm]},thick] (dec) -- (Mhat);
\draw[-{Latex[length=1.5mm, width=1.5mm]},thick] (state)--(channel);
\end{tikzpicture}}
    \caption{Discrete-time memoryless Poisson channel with random state}
    \label{figSystem}
\end{figure}
Let a DTPC with random state $(\setx\times \sets, W_S(y|x,s), \sety)$ consisting of an input alphabet $\setx \subset \mathbb{R}_0^+$, a discrete output alphabet $\sety \subset \mathbb{Z}_0^+$, a finite state set $\sets$ and a pmf $W(y|x,s)$ on $\sety$, be given. 
The channel is memoryless, i.e., the probability for a sequence $y^n \in \sety^n$ to be received if the input sequence $x^n \in \setx^n$ was sent and the sequence state is $s^n \in \sets^n$ is given by 
		\begin{equation*}
		W_S^n(y^n|x^n,s^n)=\prod_{t=1}^n W_S(y_t|x_t,s_t).
		\end{equation*}
		 The state sequence $(S_1,S_2,\ldots,S_n)$ is i.i.d. according to the distribution $P_S$. We assume that the input $X_t$ and the state $S_t$ are statistically independent for all $t\in\{1,2,\ldots,n\}$. We further assume that the input satisfies the peak power and average power constraints in \eqref{eq:peakPower} and \eqref{eq:averagePower}, respectively. We further assume that the channel state is  known to neither the sender nor the receiver. Now, let's define ID codes for the state-dependent DTPC $W_S$. 
\begin{definition}
	An $(n,N,\lambda_1,\lambda_2)$ RI code with $\lambda_1+\lambda_2<1$ for the channel $W_S$ is a family of pairs 
	$\{(Q(\cdot|i),\setd_i(s^n))_{s^n \in \sets^n}, \quad   i=1,\ldots,N\}$ with 
	\begin{equation*}
	Q(\cdot|i) \in \mathcal{P}(\setx^n(P_{\text{max}},P_{\text{avg}})), \ \setd_i(s^n) \subset \sety^n \end{equation*}
 for all $s^n \in \sets^n$ and for all $i=1,\ldots,N$
	such that the errors of the first kind and the second kind are bounded as follows.
	\begin{align*}
	&\sum_{s^n \in \sets^n} P_S^n(s^n) \int_{x^n \in \setx^n(P_{\text{max}},P_{\text{avg}})} Q(x^n|i) \\
 & \quad \cdot W_S^n(\setd_i(s^n)^c|x^n,s^n) dx^n  \leq \lambda_1,  \forall i,\\
	&\sum_{s^n \in \sets^n} P_S^n(s^n) \int_{x^n \in \setx^n(P_{\text{max}},P_{\text{avg}})} Q(x^n|i) \\
 & \quad \cdot W_S^n(\setd_j(s^n)|x^n,s^n) dx^n \leq \lambda_2, \forall i\neq j. 
	\end{align*}

\end{definition}
In the following, we define achievable ID rate and ID capacity for the state-dependent DTPC $W_S$.
\begin{definition}
	\begin{enumerate}
		\item The rate $R$ of an $(n,N,\lambda_1,\lambda_2)$ RI code for the channel $W_S$ is $R=\frac{\log\log(N)}{n}$ bits.
		\item The ID rate $R$ for $W_S$ is said to be achievable if for $\lambda \in (0,\frac{1}{2})$ there exists an $n_0(\lambda)$, such that for all $n\geq n_0(\lambda)$ there exists an $(n,2^{2^{nR}},\lambda,\lambda)$ randomized ID code for $W_S$.
		\item The RI capacity $C_{ID}(W_S,\pmax,\pavg)$ of the channel $W_S$ under peak and average power constraints \eqref{eq:peakPower} and \eqref{eq:averagePower} is the supremum of all achievable rates.
	\end{enumerate}
\end{definition}
The following corollary, a consequence of Theorem \ref{theorem:randomizedID}, outlines the RI capacity of the state-dependent DTPC $W_S$.
\begin{corollary} \label{corollary}
The RI capacity of the state-dependent DTPC $W_S$ under peak and average power constraints $\pmax$ and $\pavg$, respectively, is given by 
\begin{equation*}
    C_{ID}(W_S,\pmax,\pavg) =\max_{\substack{P_X \in \mathcal{P}(\setx) \\ X \in \setx(P_{\text{max}},P_{\text{avg}})}} I(X;Y).
\end{equation*}
\end{corollary}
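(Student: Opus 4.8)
The plan is to remove the state by averaging it out and then reduce to Theorem~\ref{theorem:randomizedID}. I would first introduce the \emph{averaged} channel
\begin{equation*}
\bar{W}(y|x) \;:=\; \sum_{s\in\sets} P_S(s)\, W_S(y|x,s),
\end{equation*}
which is again a memoryless channel on $(\setx,\sety)$ subject to the same peak and average power constraints \eqref{eq:peakPower}, \eqref{eq:averagePower}. The cornerstone is the elementary observation that, since the channel is memoryless and the state sequence is i.i.d.\ and independent of the input, the $n$-fold law averaged over the state factorizes: for all $x^n\in\setx^n$ and $y^n\in\sety^n$,
\begin{equation*}
\sum_{s^n\in\sets^n} P_S^n(s^n)\, W_S^n(y^n|x^n,s^n)
= \prod_{k=1}^{n}\left(\sum_{s\in\sets}P_S(s)\,W_S(y_k|x_k,s)\right)
= \bar{W}^n(y^n|x^n).
\end{equation*}

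Next I would exploit this identity at the level of codes. Because neither terminal observes the state, the encoders $Q(\cdot|i)$ and the decision sets are state-independent; summing the defining error expressions for $W_S$ over $s^n$ against $P_S^n$ and applying the identity above collapses each of the two error criteria to the corresponding error of the same family $\{(Q(\cdot|i),\setd_i)\}_{i}$ used over the memoryless channel $\bar{W}$. Hence an $(n,N,\lambda_1,\lambda_2)$ RI code for $W_S$ is precisely an $(n,N,\lambda_1,\lambda_2)$ RI code for $\bar{W}$ and vice versa, so the two channels possess identical sets of achievable RI rates and $C_{ID}(W_S,\pmax,\pavg)=C_{ID}(\bar{W},\pmax,\pavg)$. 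Applying Theorem~\ref{theorem:randomizedID} to $\bar{W}$ then gives $C_{ID}(\bar{W},\pmax,\pavg)=C(\bar{W},\pmax,\pavg)$, and recognizing $\bar{W}$ as the stateless channel seen by the state-ignorant encoder and decoder (the channel denoted $W$ in the statement) yields the corollary.

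Two points require care. First, $\bar{W}$ is a mixture of Poisson laws and is therefore no longer a Poisson channel, so Theorem~\ref{theorem:randomizedID} cannot be quoted verbatim; I would instead invoke the version of its proof valid for an arbitrary memoryless channel under the given input constraints — the achievability by a transmission-to-identification reduction and the soft-covering/resolvability converse both go through, as already noted in the text after Theorem~\ref{theorem:randomizedID}. Second, and this is the conceptual crux, the exactness of the reduction hinges on the decision sets being state-independent: were the decoder granted the realized state $s^n$, the effective channel would become $x\mapsto(S,Y)$ with RI capacity $\max_{P_X} I(X;Y|S)\ge C(\bar{W},\pmax,\pavg)$, generally with strict inequality. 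The main obstacle is thus to argue cleanly that the assumption ``the state is available to neither sender nor receiver'' forces state-independent decision regions, which is exactly what makes the averaging identity — and hence the collapse to $\bar{W}$ — valid.
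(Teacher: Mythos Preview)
Your approach---averaging out the i.i.d.\ state to obtain a memoryless channel $\bar{W}$ and then identifying the RI code families for $W_S$ with those for $\bar{W}$---is exactly the paper's (two-line) argument. You are in fact more careful than the paper: it simply calls the averaged channel ``the DTPC'' without justification, and its own code definition for $W_S$ writes state-dependent decoding sets $\setd_i(s^n)$ even though the receiver is declared state-ignorant, both of which you correctly flag as points needing attention.
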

\begin{proof}
    For the proof, we use the same idea as in \cite{NetworkITchap7} by averaging over the finite set of states. \\
Let $W^a(y|x) = \sum_{s \in \mathcal{S}} P_S(s) W_S(y|x,s)$ be the channel obtained by averaging the state-dependent DTPCs $W_S$ over the finite set of states $\mathcal{S}$.
Then, it can be readily verified that the Shannon capacity of the state-dependent DTPC $W_S$ when the state is available at neither the sender nor the receiver is given by
\begin{equation*}
    C(W_S, \pmax,\pavg) =\max_{\substack{P_X \in \mathcal{P}(\setx) \\ X \in \setx(P_{\text{max}},P_{\text{avg}})}} I(X;Y).
\end{equation*}
As the set of states $\sets$ is finite, we can show that the state-dependent DTPC $W_S$ satisfies the strong converse property as well. That means that the RI capacity of the state-dependent DTPC $W_S$ coincides with its Shannon transmission capacity. This completes the proof of Corollary \ref{corollary}.
\end{proof}
Now, we revisit the definitions of inf-mutual information rate and sup-mutual information rate defined in \cite{HanBook}. Those quantities will play a key role in proving Theorem \ref{theorem:randomizedID}.
We first recall the definition of $\liminf$ and $\limsup$ in probability.
\begin{definition}
For a given sequence $\{X^n\}_{n=1}^\infty$ of RVs, the $\liminf$ and $\limsup$ in probability are defined as
\begin{align*}
    \pliminf_{n\to \infty} X^n & := \sup \big\{ \beta\colon \lim_{n \to \infty} \Pr\{X^n< \beta\}=0 \big\} \\
    \plimsup_{n\to \infty} X^n & := \inf \big\{ \alpha\colon \lim_{n \to \infty} \Pr\{X^n>\alpha\}=0 \big\},
\end{align*}
respectively. We refer the reader to \cite{HanBook} for more details.
\end{definition}
The quantities mentioned above play a pivotal role in establishing information-spectrum methods.
\begin{definition}
    Let $\mathbf{X}=\{X^n\}_{n=1}^\infty$ be a general input process, where $X^n$ is an arbitrary RV with probability distribution $P_{X^n}$ on $\setx^n$. Let $\mathbf\{Y^n\}_{n=1}^\infty$ be the output of a channel $\mathbf{W}=\{W^n\}_{n=1}^\infty$ corresponding to the input $\mathbf{X}$. The output $Y^n$ is an arbitrary RV with probability distribution $P_{Y^n}$ on $\sety^n$ given by
    \begin{align*}
        P_{X^nY^n}(x^n,y^n) &= P_{X^n}(x^n) W^n(y^n|x^n).
    \end{align*}
    The spectral inf-mutual information rate and the spectral sup-mutual information rate defined in \cite{HanBook} are given by \eqref{eq:infMI} and \eqref{eq: supMI}, respectively.
   \begin{align}
  \underline{I}(\mathbf{X},\mathbf{Y}) &= \pliminf_{n\to \infty} \frac{1}{n} \frac{W^n(Y^n|X^n)}{P_{Y^n}(Y^n)}, \label{eq:infMI} \\
  \overline{I}(\mathbf{X},\mathbf{Y}) &= \plimsup_{n\to \infty} \frac{1}{n} \frac{W^n(Y^n|X^n)}{P_{Y^n}(Y^n)}, \label{eq: supMI}
   \end{align}
  where $\frac{1}{n} \frac{W^n(Y^n|X^n)}{P_{Y^n}(Y^n)}$ is called the mutual information density rate of $(\mathbf{X},\mathbf{Y})$ as defined in \cite{HanBook}.
\end{definition}

\section{Proof of Theorem \ref{theorem:randomizedID}} \label{sec:Proof}
In this section, we show that the RI capacity of the DTPC $W$ coincides with its transmission capacity. Indeed, it has been shown in \cite[Corollary 6.6.1]{HanBook} that if a channel $W$ satisfies the strong converse property under some cost constraint, then the corresponding RI capacity and transmission capacity coincide.   
In the following, we prove that the DTPC under peak power constraint $\pmax$ and average power constraint $\pavg$ satisfies the strong converse property. Let $\mathcal{U}(\pmax,\pavg)$ denote the set of all input processes $\mathbf{X}=\{X^n\}_{n=1}^\infty$ satisfying 
\begin{equation*}
    \Pr\{ X^n \in \setx^n(\pmax,\pavg)\}=1
\end{equation*}
for all $n=1,2, \ldots$. Then, as stated in \cite[Theorem 3.7.1]{HanBook}, the DTPC satisfies the strong converse property under cost constraints $\pmax$ and $\pavg$ if and only if 
\begin{align}
    \sup_{ \mathbf{X} \in \mathcal{U}(\pmax,\pavg)} \underline{I}(\mathbf{X},\mathbf{Y}) =  \sup_{ \mathbf{X} \in \mathcal{U}(\pmax,\pavg)}  \overline{I}(\mathbf{X},\mathbf{Y}).
    \label{eq:Strong}
\end{align}
By applying \cite[Theorem 3.6.1]{HanBook}, the transmission capacity of the DTPC $W$ under peak and average power constraints can be rewritten as follows:
\begin{equation*}
    C(W,\pmax,\pavg) = \sup_{ \mathbf{X} \in \mathcal{U}(\pmax,\pavg)} \underline{I}(\mathbf{X},\mathbf{Y}).
\end{equation*}
Thus, in order to develop the strong converse property for the DTPC $W$ under cost constraints $\pmax$ and $\pavg$, in view of \cite[Theorem 3.6.1]{HanBook} it suffices to show 
\begin{align}
    \sup_{ \mathbf{X} \in \mathcal{U}(\pmax,\pavg)}  \overline{I}(\mathbf{X},\mathbf{Y}) & \leq \sup_{\substack{P_X \in \mathcal{P}(\setx) \nonumber \\ X \in \setx(P_{\text{max}},P_{\text{avg}}) }} I(X;Y) \\
    & =\max_{\substack{P_X \in \mathcal{P}(\setx) \\ X \in \setx(P_{\text{max}},P_{\text{avg}})}} I(X;Y) \nonumber \\
    & = C(W,\pmax,\pavg).\label{eq:StrongConverseProperty}
\end{align}

Let $P_X \in \mathcal{P}\left(\setx(\pmax,\pavg)\right)$ an arbitray input distribution and $P_Y \in \mathcal{P}(\sety)$ the corresponding output distribution via the DTPC $W$. Then, for a fixed input sequence $x^n=(x_1,x_2,\ldots,x_n) \in \setx^n(\pmax,\pavg)$, we define $I({Y},x)$ as follows
\begin{align*}
    I({Y},x) & =\log \frac{W(Y|x)}{P_{{Y}}(Y)}.
\end{align*}
The conditional expectation of $I({Y},x)$ given $X=x$ is given by
\begin{align}
    \mathbb{E}[I({Y},x)]&= \sum_{y=0}^\infty W(y|x) \log \frac{W(y|x)}{P_{{Y}}(y)} \nonumber \\
    &= \infdiv{W(\cdot|x)}{P_{{Y}}}. \label{eq:expectationYi}
\end{align}
The mutual information induced by the input distribution $P_{X}$ is given by
\begin{align*}
    I(X,Y)&= \int_{0}^{\pmax} P_X(x)  \mathbb{E}[I({Y},x)] dx \\
    &=  \int_{0}^{\pmax} P_X(x)  \infdiv{W(\cdot|x)}{P_{{Y}}} dx.
\end{align*}

In the following, we define the Lagrangian function $L(\mu,x,P_{{X}})$ with Lagrange multiplier $\mu \geq 0$ as
\begin{align}
    L(\mu,x,P_{{X}}) &= I({X},{Y}) + \mu(x-\pavg)\nonumber\\ 
    & \quad -\infdiv{W(\cdot|x)}{P_{{Y}}}.
\end{align}
It has been shown in \cite{shamaiCapacityAchieving} and \cite[Theorem 5]{capacityAchievingDistribution} that Kuhn-Tucker conditions yield the following theorem.
\begin{theorem}[\cite{capacityAchievingDistribution}]
The distribution $P_{{X}} \in \mathcal{P}\left( \setx(\pmax,\pavg)\right)$ is capacity-achieving iff for some $\mu \geq 0$, the following conditions are satisfied 
\begin{align}
      L(\mu,x,P_{{X}})& \geq 0, \quad x \in [0,\pmax], \\
        L(\mu,x,P_{{X}}) & = 0, \quad  x \in \Phi_x,
\end{align}
where $\Phi_x$ is the support of $P_X$. \label{theorem:Lagrange}
\end{theorem}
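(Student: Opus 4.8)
The plan is to treat this as a concave maximization of the mutual information over the space of input laws and to read off the two stated conditions from the associated Karush--Kuhn--Tucker (KKT) optimality criterion. First I would record that, with $P_Y$ the output distribution induced by $P_X$, the objective admits the mutual-information--divergence form $I(X;Y)=\int_0^{\pmax}P_X(x)\,\infdiv{W(\cdot|x)}{P_Y}\,dx$ already established in \eqref{eq:expectationYi}. Mutual information is concave in the input distribution for a fixed channel, and the feasible set $\mathcal{P}\left(\setx(\pmax,\pavg)\right)$ --- probability measures supported on the compact interval $[0,\pmax]$ subject to the linear constraint $\mathbb{E}[X]\leq\pavg$ --- is convex and weak-$*$ compact. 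Hence a maximizer exists, and because the program is concave the KKT conditions are both necessary and sufficient once a constraint qualification is secured.

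The computational core is the directional (Gateaux) derivative of $I$ at $P_X$ in the direction of a point mass $\delta_{x_0}$. Writing $P_X^\theta=(1-\theta)P_X+\theta\,\delta_{x_0}$, so that $P_Y^\theta=(1-\theta)P_Y+\theta\,W(\cdot|x_0)$, and splitting $I_\theta=\sum_y\int P_X^\theta(x)W(y|x)\log W(y|x)\,dx-\sum_y P_Y^\theta(y)\log P_Y^\theta(y)$ into its linear part and its entropic part, a short calculation (the $1/\ln 2$ terms cancel because $\sum_y(W(y|x_0)-P_Y(y))=0$) yields
\begin{equation*}
\left.\frac{d}{d\theta}I_\theta\right|_{\theta=0^+}=\infdiv{W(\cdot|x_0)}{P_Y}-I(X;Y).
\end{equation*}
Adjoining the average-power constraint through a multiplier $\mu\geq 0$, the directional derivative of $I(X;Y)-\mu(\mathbb{E}[X]-\pavg)$ toward $\delta_{x_0}$ becomes $\infdiv{W(\cdot|x_0)}{P_Y}-I(X;Y)-\mu\,x_0+\mu\,\mathbb{E}[X]$. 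By complementary slackness $\mu(\mathbb{E}[X]-\pavg)=0$, so $\mathbb{E}[X]$ may be replaced by $\pavg$, and the optimality requirement that this derivative be nonpositive for every $x_0\in[0,\pmax]$ is exactly $L(\mu,x_0,P_X)\geq 0$.

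For sufficiency I would argue directly from concavity, writing $I(\cdot)$ for the mutual information as a functional of the input law: for any feasible $Q_X$, concavity gives $I(Q_X)\leq I(P_X)+\int_0^{\pmax}\bigl(\infdiv{W(\cdot|x)}{P_Y}-I(P_X)\bigr)\,dQ_X(x)=\int_0^{\pmax}\infdiv{W(\cdot|x)}{P_Y}\,dQ_X(x)$, whereupon the inequality $\infdiv{W(\cdot|x)}{P_Y}\leq I(P_X)+\mu(x-\pavg)$ together with $\mathbb{E}_{Q_X}[X]\leq\pavg$ and $\mu\geq 0$ forces $I(Q_X)\leq I(P_X)$. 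For the equality clause on the support $\Phi_x$, set $g(x)=\infdiv{W(\cdot|x)}{P_Y}-I(X;Y)-\mu(x-\pavg)$; one checks $\int g\,dP_X=-\mu(\mathbb{E}[X]-\pavg)=0$, so a nonpositive $g$ integrating to zero against $P_X$ must vanish on $\Phi_x$, which is precisely $L(\mu,x,P_X)=0$ there. Necessity of a finite multiplier follows from Slater's condition, which holds because $\delta_0$ is strictly feasible ($\mathbb{E}[X]=0<\pavg$).

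The main obstacle is not any single estimate but the rigorous handling of the infinite-dimensional program: one must justify Gateaux differentiability of $I$ on the space of measures, invoke a KKT/strong-duality theorem valid for convex programs over measures so that the multiplier $\mu$ genuinely exists, and verify that testing optimality against the extreme points $\delta_{x_0}$ of the simplex suffices to characterize the global maximizer. These are exactly the technical steps carried out in \cite{shamaiCapacityAchieving} and \cite[Theorem 5]{capacityAchievingDistribution}, on which the present statement relies.
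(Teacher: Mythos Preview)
The paper does not supply its own proof of this theorem; it is quoted as a known result from \cite{shamaiCapacityAchieving} and \cite[Theorem~5]{capacityAchievingDistribution}, prefaced only by the remark that ``Kuhn--Tucker conditions yield the following theorem.'' Your sketch is a correct rendering of precisely that Kuhn--Tucker argument (Gateaux derivative of $I$ toward point masses, complementary slackness, Slater's condition for the multiplier), so it matches the approach of the cited sources on which the paper relies.
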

It has been shown in \cite{shamaiCapacityAchieving} that the capacity-achieving input distribution is unique and discrete with a finite number of mass points for finite peak power and average power constraints. Let $P_{\bar{X}}$ denote the capacity-achieving input distribution of the DTPC under peak power and average power constraints $\pmax$ and $\pavg$, respectively. Therefore, as described in \cite{capacityAchievingDistribution} and w.l.o.g. the input distribution $P_{\bar{X}}$ is given by
\begin{align}
    P_{\bar{X}}(x) & = p_1 \delta(x-\bar{x}_1)+ p_2 \delta(x-\bar{x}_2)+\cdots + p_m\nonumber \\
    & \quad \cdot \delta(x-\bar{x}_m), \quad x \in \setx(\pmax,\pavg),
\end{align}
where $\delta(\cdot)$ denotes the Dirac impulse function, $\bar{\Phi}_x=\{\bar{x}_1,\bar{x}_2,\ldots,\bar{x}_m\}, \quad 0\leq \bar{x}_1< \bar{x}_2<\ldots<\bar{x}_m\leq \pmax $ is a finite input constellation and $\Phi_p=\{p_1,p_2,\ldots,p_m\}$ is the set of the corresponding  probability masses. Let $P_{\bar{Y}}$ denote the output distribution of the DTPC $W$ corresponding to the distribution $P_{\bar{X}}$. For all $y \in \sety$, we have
\begin{align}
    P_{\bar{Y}}(y) & = \int_{x \in \setx(\pmax,\pavg)} P_{\bar{X}}(x) W(y|x) dx \nonumber\\
    & = \int_{x \in \setx(\pmax,\pavg)} \big( \sum_{j=1}^m p_j \delta(x-\bar{x}_j) \big) W(y|x) dx \nonumber \\
    &= \sum_{j=1}^m p_j \int_{x \in \setx(\pmax,\pavg)} W(y|x) \delta(x-\bar{x}_j) dx \nonumber\\
    & = \sum_{j=1}^m p_j W(y|\bar{x}_j). \label{eq:pY}
\end{align}
Let $\bar{X}$ and $\bar{Y}$ be two RVs with probability distribution functions $P_{\bar{X}}$ and $P_{\bar{Y}}$ on $\setx(\pmax,\pavg)$ and $\sety$, respectively. In view of \eqref{eq:expectationYi}, the capacity $ C(W,\pmax,\pavg)$ of the DTPC $W$ can be rewritten as follows:
\begin{align*}
     &C(W,\pmax,\pavg)\\
      & \quad =\max_{\substack{P_X \in \mathcal{P}(\setx) \\ X \in \setx(P_{\text{max}},P_{\text{avg}})}} I(X;Y) \\
     &\quad = I(\bar{X},\bar{Y}) \\
     &\quad = \int_{x \in \setx(\pmax,\pavg)} P_{\bar{X}}(x) \sum_{y=0}^{\infty} W(y|x)(y|x_i) \log \frac{W(y|x)}{P_{\bar{Y}}(y)} dx \\
      & \quad = \int_{x \in \setx(\pmax,\pavg)} P_{\bar{X}}(x)\infdiv{W(\cdot|x)}{P_{\bar{Y}}} dx \\
     &= \quad \sum_{j=1}^m p_j \infdiv{W(\cdot|\bar{x}_j)}{P_{\bar{Y}}}.
\end{align*}
It holds for each blocklength $n$ that
\begin{align*}
    \frac{1}{n} \log \frac{W^n(Y^n|X^n)}{P_{\bar{Y}^n}(Y^n)} &= \frac{1}{n} \sum_{t=1}^n \log \frac{W(Y_t|X_t)}{P_{\bar{Y}}(Y_t)}.
\end{align*}
For fixed $x^n=(x_1,x_2,\ldots,x_n) \in \setx^n(\pmax,\pavg)$, we define
\begin{align*}
    I(\bar{Y}_t,x_t) &= \log \frac{W(Y_t|X_t)}{P_{\bar{Y}}(Y_t)}.
\end{align*}
 In addition, we have for all $t=1,2,\ldots, n$
\begin{align*}
    \mathbb{E}[I(\bar{Y}_t,x_t)] & = \infdiv{W(\cdot|x_t)}{P_{\bar{Y}}}.
\end{align*}
In order to show \eqref{eq:StrongConverseProperty}, we first prove the following inequality:
\begin{align}
    \mathbb{E}[\frac{1}{n} \sum_{t=1}^n I(\bar{Y}_i|x_i)] \leq C(W,\pmax,\pavg). \label{eq:StronConverse1}
\end{align}
It follows from Theorem \ref{theorem:Lagrange} that for $x \in [0,\pmax]$
\begin{align*}
     &L(\mu,x,P_{\bar{X}})\\
     &= I(\bar{X},\bar{Y}) + \mu(x-\pavg)-\infdiv{W(\cdot|x)}{P_{\bar{Y}}} \\
     &= C(W,\pmax,\pavg) + \mu(x-\pavg)-\infdiv{W(\cdot|x)}{P_{\bar{Y}}} \\
     & \geq 0. 
\end{align*}
As the sequence $x^n$ satisfies the peak power and average power constraints, we have
\begin{align}
  \mathbb{E}[I(\bar{Y},x)]= \infdiv{W(\cdot|x)}{P_{\bar{Y}}} & \leq  C(W,\pmax,\pavg). 
  \label{eq:expectationBound}
\end{align}
It yields that
\begin{align*}
\mathbb{E}[\frac{1}{n} \sum_{t=1}^n I(\bar{Y}_t,x_t)] & \leq C(W,\pmax,\pavg).
\end{align*}
This completes the proof of inequality \eqref{eq:StronConverse1}.\\
Next, we want to show that the variance of $I(\bar{Y}_t,x_t),\quad t=1,\ldots,n$ w.r.t. the conditional probability mass function $W(\cdot|x_t)$ for fixed $x_t \in \setx(\pmax,\pavg)$. For this purpose, we prove that $\mathbb{E}\big[(I(\bar{Y}_t,x_t))^2\big],\quad t=1,\ldots,n$ is finite. For notational simplicity, we will drop the index $t$. 
Let $\sety_1$ and $\sety_2$ be defined as
\begin{align*}
\sety_1 & = \{ y \in \sety \colon \frac{P_{\bar{Y}}(y)}{W(y|x)} > 1 \} ,\\
\sety_2 &=  \{ y \in \sety \colon \frac{P_{\bar{Y}}(y)}{W(y|x)} \leq 1 \}.
\end{align*}
We have
\begin{align}
 & \mathbb{E}\big[(I(\bar{Y},x))^2\big ] \\
 &= \sum_{y=0}^\infty W(y|x) \log^2 \big( \frac{W(y|x)}{P_{\bar{Y}}(y)}\big) \nonumber\\ 
 &=  \sum_{y \in \sety_1} W(y|x) \log^2 \big( \frac{W(y|x)}{P_{\bar{Y}}(y)}\big) \nonumber\\ 
 &\quad + \sum_{y\in \sety_2} W(y|x) \log^2 \big( \frac{W(y|x)}{P_{\bar{Y}}(y)}\big) \label{eq:sum}\\
\end{align}
We first establish an upper bound on the first term of the sum in \eqref{eq:sum}.
\begin{align}
& \sum_{y \in \sety_1} W(y|x) \log^2 \big( \frac{W(y|x)}{P_{\bar{Y}}(y)}\big) \nonumber\\
 & = \sum_{y\in \sety_1} W(y|x) \log^2 \big( \frac{P_{\bar{Y}}(y)}{W(y|x)}\big) \nonumber \\
 & \overset{(a)}{\leq}  \sum_{y\in \sety_1} W(y|x) \log \big( \frac{P_{\bar{Y}}(y)}{W(y|x)}\big) \cdot \log(e) \cdot \big( \frac{P_{\bar{Y}}(y)}{W(y|x)}\big) \nonumber \\
 &=  \log(e) \sum_{y\in \sety_1}  P_{\bar{Y}}(y) \log \big( \frac{P_{\bar{Y}}(y)}{W(y|x)}\big) \nonumber \\
 & \overset{(b)}{=} \log(e) \bigg(  \infdiv{P_{\bar{Y}}(\cdot)}{W(\cdot|x)} \nonumber \\
 & \quad - \sum_{y\in \sety2}  P_{\bar{Y}}(y) \log \big( \frac{P_{\bar{Y}}(y)}{W(y|x)}\big) \bigg) \nonumber \\
 &\overset{(c)}{ \leq} \log(e) \bigg(  \infdiv{P_{\bar{Y}}(\cdot)}{W(\cdot|x)} \nonumber \\
 & \quad - \sum_{y\in \sety2}  P_{\bar{Y}}(y)  \big( 1- \frac{W(y|x)}{P_{\bar{Y}}(y)}\big) \bigg) \nonumber \\
 & =  \log(e) \bigg(  \infdiv{P_{\bar{Y}}(\cdot)}{W(\cdot|x)} + \sum_{y\in \sety2} W(y|x)-P_{\bar{Y}}(y) \bigg) \nonumber \\
 & \overset{(d)}{=} \log(e) \bigg(  \infdiv{P_{\bar{Y}}(\cdot)}{W(\cdot|x)} + \frac{1}{2} d\left(P_{\bar{Y}}(\cdot),W(\cdot|x)\right) \bigg) \nonumber \\
 & \overset{(e)}{\leq}  \log(e) \bigg(  \infdiv{P_{\bar{Y}}(\cdot)}{W(\cdot|x)} +1 \bigg)\nonumber \\
 & \overset{(f)}{=}  \log(e) \bigg( \infdiv{\sum_{j=1}^m p_j W(\cdot| \bar{x}_j)}{\sum_{j=1}^m p_j W(\cdot|x)} +1 \bigg) \nonumber \\
 & \overset{(g)}{\leq} \log(e)  \bigg(\sum_{j=1}^m p_j  \infdiv{W(\cdot|\bar{x}_j}{W(\cdot|x)} + 1 \bigg) \nonumber \\
 & \overset{(h)}{\leq} \log(e) \bigg( \sum_{j=1}^m p_j \big( (\lambda+ \bar{x}_j) \log \frac{\lambda+ \bar{x}_j}{\lambda+x}  \nonumber\\
 & \quad (\bar{x}_j-x) \big) + 1 \bigg) \nonumber \\
 & \overset{(i)}{\leq} \log(e) \bigg( \sum_{j=1}^m p_j ( \lambda+ \pmax) \log \frac{\lambda+\pmax}{\lambda} + \pmax +1 \bigg) \nonumber \\
 & \overset{(j)}{\leq} \log(e) \bigg( \log(e)\frac{(\lambda+\pmax)^2}{\lambda}+\pmax +1 \bigg), \label{eq:firstterm}
\end{align}
where $(a)$ follows because $\log \big( \frac{P_{\bar{Y}}(y)}{W(y|x)}\big) >0$ and $\log(x) \leq (x-1)\log(e)$, $(b)$ follows from the definition of the Kullback-Leibler divergence, $(c)$ follows because $\log(x)\geq \log(e)(1-\frac{1}{x})$, $(d)$ follows from the definition of the variational distance $d\left(P_{\bar{Y}}(\cdot),W(\cdot|x)\right)$, 
$(e)$ follows because $0\leq d\left(P_{\bar{Y}}(\cdot),W(\cdot|x)\right) \leq 2$,
$(f)$ follows from \eqref{eq:pY}, $(g)$ follows from the convexity of the Kullback-Leibler divergence,
$(h)$ because $W(\cdot|\bar{x}_j), \quad j=1,\ldots,m$ and $W(\cdot|x)$ are Poisson distributed with mean $\lambda+\bar{x}_j$ and $\lambda+x$, respectively, 
$(i)$ follows because $0\leq x, \ \bar{x}_j\leq \pmax$ and $(j)$ follows because $\log(x) \leq (x-1)\log(e)$. \\
Now, we compute an upper bound on the second term of the sum in \eqref{eq:sum}. We have
\begin{align}
& \sum_{y \in \sety_2} W(y|x) \log^2 \big( \frac{W(y|x)}{P_{\bar{Y}}(y)}\big) \nonumber\\
&=  \sum_{y \in \sety_2} W(y|x) \big( \log(W(y|x)) - \log(P_{\bar{Y}}(y)) \big)^2 \nonumber \\
& \overset{(a)}{=}  \sum_{y \in \sety_2} W(y|x) \bigg( \log(W(y|x)) \nonumber \\
& \quad - \log\big(\sum_{j=1}^m p_j W(\cdot| \bar{x}_j)\big) \bigg)^2 \nonumber \\
&\overset{(b)}{\leq}  \sum_{y \in \sety_2} W(y|x) \bigg( \log(W(y|x)) \nonumber \\
& \quad - \sum_{j=1}^m p_j \log\big(W(\cdot| \bar{x}_j)\big)  \bigg)^2 \nonumber \\
& \overset{(c)}{=}  \sum_{y \in \sety_2} W(y|x) \bigg( \big(\log(x+\lambda) -  \sum_{j=1}^m p_j \log(x_j+\lambda) \big)y \nonumber \\
& \quad + \big(  \sum_{j=1}^m p_jx_j - x \big) \bigg)^2 \nonumber \\
&\overset{(d)}{\leq} \sum_{y \in \sety_2} W(y|x) \bigg( \alpha y + \beta \bigg)^2 \nonumber \\
&= \alpha^2\sum_{y \in \sety_2} W(y|x) y^2 + 2 \alpha  \beta \sum_{y \in \sety_2}  W(y|x) y \nonumber \\
& \quad + \beta^2 \sum_{y \in \sety_2} W(y|x) \nonumber \\
& \leq \alpha^2\mathbb{E}[Y^2|X=x]+  2 \alpha\nonumber  \beta \mathbb{E}[Y|X=x] + \beta^2(\lambda,\pmax) \nonumber \\
& \overset{(e)}{\leq}  \alpha^2 (\lambda+\pmax)^2 + 2 \alpha  \beta (\lambda+\pmax) +  \beta^2,\label{eq:secondterm}
\end{align}
where $d(\cdot,\cdot)$ denotes the total variational distance, $(a)$ follows \eqref{eq:pY}, $(b)$ follows from the convexity of $-\log(x)$, $(c)$ follows because $W(\cdot|\bar{x}_j), \quad j=1,\ldots,m$ and $W(\cdot|x)$ are Poisson distributed with mean $\lambda+\bar{x}_j$ and $\lambda+x$, respectively, $(d)$ follows from \eqref{eq:alpha} and \eqref{eq:beta} and $(e)$ follows because given the input $X=x$, the output $Y$ is Poisson distributed with mean $\lambda+x$ and $0<x\leq \pmax$. \\
Let $\alpha$ and $\beta$ be defined as follows:
\begin{align*}
    \alpha &= \log(1+\frac{\pmax}{\lambda} ) \\
    \beta & = \pmax.
\end{align*}
Then, it can be easily shown that 
\begin{align}
\log(x+\lambda) -  \sum_{j=1}^m p_j \log(x_j+\lambda) & \leq \alpha,
\label{eq:alpha}
\end{align}
and 
\begin{align}
    \sum_{j=1}^m p_jx_j - x & \leq \beta. \label{eq:beta}
\end{align}
It follows from \eqref{eq:firstterm} and \eqref{eq:secondterm} that
 \begin{align*}
    &{\text{Var}}[\frac{1}{n} \sum_{t=1}^n I(\bar{Y}_t,x_t)] \nonumber\\
    & \leq \log(e) \bigg( \log(e)\frac{(\lambda+\pmax)^2}{\lambda}+\pmax +1 \bigg) \nonumber \\
    & \quad +  \alpha^2 (\lambda+\pmax)^2 + 2 \alpha  \beta (\lambda+\pmax) +  \beta^2 .\\
\end{align*}
Let $\gamma(\lambda,\pmax)$ be defined as 
\begin{align*}
    \gamma(\lambda,\pmax) & = \log(e) \bigg( \log(e)\frac{(\lambda+\pmax)^2}{\lambda}+\pmax +1 \bigg) \\
    & \quad +  \alpha^2 (\lambda+\pmax)^2 + 2 \alpha  \beta (\lambda+\pmax) +  \beta^2.
\end{align*}
Therefore, Chebyshev's inequality implies
\begin{align}
    &\Pr\big\{ \frac{1}{n} \sum_{t=1}^n I(\bar{Y}_t,x_t) \geq C(W,\pmax,\pavg)+ \nu | X^n=x^n \big \} \nonumber\\
    & \quad \leq  \frac{\gamma(\lambda,\pmax)}{n}, \label{eq:tshebychev}
\end{align}
where $\nu>0$ is an arbitrary constant. As $\mathbf{X}=\{X^n\}_{n=1}^\infty \in \mathcal{U}(\pmax,\pavg)$ is assumed, \eqref{eq:tshebychev} holds for all realizations $x^n$ of $X^n$. Thus, we have
\begin{align*}
    &\Pr\big\{ \frac{1}{n} \sum_{t=1}^n I(\bar{Y}_t,X_t) \geq C(W,\pmax,\pavg)+ \nu  \big \} \\
    & \quad \leq  \frac{\gamma(\lambda,\pmax)}{n}.
\end{align*}
That means
\begin{align*}
    &\Pr\big\{ \frac{1}{n} \log \frac{W^n(Y^n|X^n)}{P_{\bar{Y}^n}(Y^n)} \geq C(W,\pmax,\pavg)+ \nu  \big \} \\ & \quad \leq  \frac{\gamma(\lambda,\pmax)}{n}. 
\end{align*}
We have
\begin{align*}
    &\lim_{n \to \infty} \Pr\big\{ \frac{1}{n} \log \frac{W^n(Y^n|X^n)}{P_{\bar{Y}^n}(Y^n)} \geq C(W,\pmax,\pavg)+ \nu  \big \} \\
    & \quad =0.
\end{align*}
Since $\nu>0$ is chosen arbitrarily, we have
\begin{align}
    \plimsup_{n \to \infty} \frac{1}{n} \log \frac{W^n(Y^n|X^n)}{P_{\bar{Y}^n}(Y^n)} & \leq  C(W,\pmax,\pavg). \label{eq:lastIneq}
\end{align}
It follows from \eqref{eq:lastIneq} and \cite[Lemma 3.2.1]{HanBook} that
\begin{align*}
    \plimsup_{n \to \infty} \frac{1}{n} \log \frac{W^n(Y^n|X^n)}{P_{{Y}^n}(Y^n)} & \leq  C(W,\pmax,\pavg). \label{eq:lastIneq}
\end{align*}
This completes the proof of \eqref{eq:StrongConverseProperty}. Thus, \eqref{eq:Strong} holds, and the strong converse property is proved for the DTPC $W$ under peak and average power constraints $\pmax$ and $\pavg$, respectively. That means that the RI capacity of the DTPC $C_{ID}(W,\pmax,\pavg)$ coincides with its transmission capacity $C(W,\pmax,\pavg)$.

\section{Conclusions} \label{sec: conclusion}

In our work, we perform an information-theoretic analysis for event-triggered molecular communication. In the previous considerations known to us, the model was considered with deterministic encoding. We now assume in our analysis that local randomness is available at the sender. Instead of $n^{nR_1}$ messages in the deterministic case, we get $2^{2^{nR_2}}$ in the randomised scenario. Furthermore, we derive an exact formula for the RI capacity, whereas in the DI case, only upper and lower bounds have been established so far. Of course, one can ask how realistic it is to have randomness available at the sender. One possible scenario would be a sender outside the body (for example in a bracelet). Another scenario would be to gain randomization through resources such as feedback.
Finally, we examine the scenario where the channel depends on a random state and derive the corresponding capacity.
In future studies, we should investigate deterministic identification with resources (as indicated above) that allow randomized encoding. Additionally, a fundamental objective is to analyze encoding techniques with finite block lengths for this model. Our work marks the initial step in undertaking this analysis.

\section*{Acknowledgments}
The authors acknowledge the financial support by the Federal Ministry of Education and Research
of Germany (BMBF) in the programme of “Souverän. Digital. Vernetzt.”. Joint project 6G-life, project identification number: 16KISK002.
H. Boche and W. Labidi were further supported in part by the BMBF within the national initiative on Post Shannon Communication (NewCom) under Grant 16KIS1003K. C.\ Deppe was further supported in part by the BMBF within the national initiative on Post Shannon Communication (NewCom) under Grant 16KIS1005. C. Deppe was also supported by the DFG within the project DE1915/2-1.

\bibliographystyle{IEEEtran}
\bibliography{definitions,references}
\clearpage
\newpage

\IEEEtriggeratref{4}



\end{document}